\title[Learning-enhanced Nonlinear MPC with KNODE ensembles]{Learning-enhanced Nonlinear Model Predictive Control using Knowledge-based Neural Ordinary Differential Equations and Deep Ensembles}
\author{%
    \Name{Kong Yao Chee} \Email{ckongyao@seas.upenn.edu}
 \newline \Name{M. Ani Hsieh} \Email{m.hsieh@seas.upenn.edu}
 \newline \Name{Nikolai Matni} \Email{nmatni@seas.upenn.edu}
 \newline \addr University of Pennsylvania, Philadelphia, PA 19104
}
\begin{document}
\maketitle
\vspace{-0.8cm}
\begin{abstract}
Nonlinear model predictive control (MPC) is a flexible and increasingly popular framework used to synthesize feedback control strategies that can satisfy both state and control input constraints. In this framework, an optimization problem, subjected to a set of dynamics constraints characterized by a nonlinear dynamics model, is solved at each time step. Despite its versatility, the performance of nonlinear MPC often depends on the accuracy of the dynamics model. In this work, we leverage deep learning tools, namely knowledge-based neural ordinary differential equations (KNODE) and deep ensembles, to improve the prediction accuracy of this model. In particular, we learn an ensemble of KNODE models, which we refer to as the KNODE ensemble, to obtain an accurate prediction of the true system dynamics.  This learned model is then integrated into a novel learning-enhanced nonlinear MPC framework. We provide sufficient conditions that guarantees asymptotic stability of the closed-loop system and show that these conditions can be implemented in practice. We show that the KNODE ensemble provides more accurate predictions and illustrate the efficacy and closed-loop performance of the proposed nonlinear MPC framework using two case studies.
\end{abstract}

\begin{keywords}%
  Nonlinear model predictive control, deep learning, neural ordinary differential equations, deep ensembles
\end{keywords}

\vspace{-0.4cm}
\section{Introduction} 
\vspace{-0.2cm}
Advances in nonlinear optimization algorithms and improvements in hardware computational power, nonlinear model predictive control (MPC) is proliferating across a range of autonomous systems applications, \textit{e.g.}, \cite{chowdhri2021integrated, yao2021state}. For similar reasons, deep learning tools are also becoming prevalent, particularly in the context of learning dynamical system representations. These tools can be used to extract meaningful information of the system from data, which is then incorporated into the dynamics models for MPC. Within the MPC framework, the models are deployed to predict the system dynamics, which consequently enables accurate closed-loop performance. Despite a surge in recent literature, the choice of deep learning tools and how they can be seamlessly integrated to learn dynamics models within the MPC framework remains a relatively open problem (\cite{mesbah2022fusion}). In this work, we present a novel learning-enhanced nonlinear MPC framework that uses a suite of deep learning techniques to enhance the accuracy of the dynamics model, which improves the performance of the closed-loop system.

\textbf{Related Works.}
Learning-based model predictive control (LMPC) is a fast-growing research area and a variety of topics have been studied in recent literature. Here, we highlight recent developments, particularly in the context of learning nonlinear dynamics models, and refer interested readers to recent reviews by \cite{hewing2020learning} and \cite{mesbah2022fusion}. There are a number of works, \textit{e.g.}, \cite{kabzan2019learning}, \cite{maddalena2020learning}, \cite{torrente2021data}, that adopt a non-parametric approach to learning dynamics models, before incorporating them into an MPC framework. However, it is well known that non-parametric models such as Gaussian processes may not be amenable to large amounts of data. Hence, a data selection strategy may be required to distill the relevant data for training and deployment. On the other hand, parametric models such as feedforward neural networks (NNs) and its variants, have also been explored within the LMPC context.  \cite{chen2020transfer} use NNs to model heating, ventilation and air-conditioning systems (HVAC) within smart buildings and integrate it into an MPC scheme. \cite{son2022learning} utilized NNs to learn the model-plant mismatch and applied it into an offset-free MPC scheme. Architectural variants such as recurrent neural networks and long short-term memory networks have also been considered in LMPC, \textit{e.g.}, \cite{terzi2021learning,wu2021statistical,saviolo2022physics}. These networks are structurally more complex and incur significant computational costs when integrated with a model predictive controller. In general, literature that apply ensemble methods in the context of LMPC seem to be sparser. \cite{kusiak2010modeling} use NN ensembles to model HVAC systems and use it in a particle swarm optimization problem. However, these processes operate at slower time scales and may not be suitable for fast-paced applications. A related concept is scenario-based MPC, \textit{e.g.}, \cite{bernardini2009scenario, schildbach2014scenario}, where scenarios are sampled from a given distribution and used in linear models within an MPC framework. One difference in this approach is that these scenarios collected from the system are in terms of the uncertainty and not of the states and control inputs. A second difference is in the formulation of the optimization problem, where the scenarios are considered simultaneously, instead of fusing them using a selected strategy, which is more common in ensemble methods.  For a recent review on ensemble methods in deep learning, we refer the readers to \cite{ganaie2021ensemble}. 

\textbf{Contributions.} Knowledge-based neural ordinary differential equations (KNODE), which comprises of a nominal first-principles model and a neural ordinary differential equation (NODE) component, are used to extract dynamics models which are then utilized within a MPC framework (\cite{chee2022knode}). In this work, we leverage the fact that KNODE models are relatively lightweight and introduce deep ensembling to enhance the learned dynamics model. We show that this provides more accurate state predictions and improves closed-loop performance. Our contributions in this work are three-fold. First, we present a novel learning-enhanced nonlinear MPC framework, leveraging concepts in deep learning, namely KNODEs and deep ensembles. These tools allow us to assimilate data into the learned model in a systematic and amenable manner, which in turn provides an accurate representation of the system dynamics and improves closed-loop performance. Second, for our proposed framework, we provide sufficient conditions that render the closed-loop system under the proposed MPC framework asymptotically stable. We show that these conditions, when applied with our proposed framework, are practical and can be implemented readily. Third, using two case studies, we demonstrate the efficacy and versatility of our proposed framework.

\vspace{-0.4cm}
\section{Problem Framework and Preliminaries} \label{section:problem_formulation}
\vspace{-0.2cm}
Consider a discrete-time, nonlinear system with the following dynamics,
\vspace{-0.2cm}
\begin{equation} \label{eq:dynamics}
    x^{+} = f(x,u),
\vspace{-0.2cm}
\end{equation}
where $x,x^+ \in \mathbb{R}^n$ are the current and successor states, $u \in \mathbb{R}^m$ is the control input to the system and it is assumed that $f$ is twice continuously differentiable. To control the system, we consider a nonlinear MPC framework that generates a sequence of optimal control inputs, which operates in a receding horizon manner. First, a dynamics model is constructed using first principles or prior knowledge of the system. Using this model, together with information on the state and control input constraints, we formulate the following finite horizon constrained optimal control problem,
\vspace{-0.15cm}
\begin{subequations} \label{eq:ftocp}
\small\begin{align}
    \;\underset{U}{\textnormal{minimize}}\quad\; & \sum_{i=0}^{N-1} q\left(x_i, u_i\right) + p(x_N)\\
    \text{subject to}\quad\; &x_{i+1} = \hat{f}(x_i, u_i), \quad \forall\, i=0,\dotsc,N-1,\label{eq:mpc_model_intro}\\
    \; &x_i \in \mathcal{X}, \quad u_i \in \mathcal{U}, \quad \forall\, i=0,\dotsc,N-1,\\
    \; &x_N \in \mathcal{X}_f,\quad x_0 = x(k), \label{eq:initial_state}
\end{align}
\end{subequations}
where $N$ is the horizon, $q(x_i,u_i)$ and $p(x_N)$ are the state and terminal state costs, and the sets $\mathcal{X},\,\mathcal{U}$ and $\mathcal{X}_f$ denote the constraint sets for the states, control inputs and the terminal state respectively. At each time step $k$, using state measurements $x(k)$, we solve \eqref{eq:ftocp} and obtain a sequence of optimal control inputs, $U^{\star} := \{u_0^{\star},\dotsc,u_{N-1}^{\star}\}$. We apply the first vector in this sequence $u_0^{\star}(x(k))$ to the system as the control action. Hence, the control law is expressed as $u(k) = u_0^{\star}(x(k))$ and the closed-loop system is given as $x^+ = f(x,u_0^{\star}(x)) := f_{cl}(x)$. 

If the dynamics model in \eqref{eq:mpc_model_intro} represents the system \eqref{eq:dynamics} with sufficient accuracy, we can expect the MPC framework to perform well. However, like many models, it is unlikely for the dynamics model to capture the true dynamics perfectly. To improve upon this dynamics model, we utilize deep learning tools to enhance its accuracy within the MPC framework. We first adopt a NODE formulation to represent any unknown, poorly modeled, and/or residual dynamics. During training, we combine this NODE model with a nominal model derived from prior knowledge to learn a {hybrid} KNODE model. This allows us to get a more accurate representation of the true dynamics in \eqref{eq:dynamics}. Next, instead of learning a single KNODE model, we learn an ensemble of KNODE models, which we denote as a KNODE ensemble. Through experiments, we show that the KNODE ensemble provides better accuracy and generalization in terms of state predictions and better closed-loop performance in terms of trajectory tracking.

\textbf{Notation.} $\mathbb{R}$ and $\mathbb{R}^n$ denote the set of real numbers and the $n$-dimensional vector space of real numbers. We use $||\cdot||$ to denote the Euclidean and spectral norm for vectors and matrices respectively. $B_{r}$ denotes a closed ball in $\mathbb{R}^n$ centered at the origin, \textit{i.e.}, $B_r := \{x \in \mathbb{R}^n \,|\, ||x|| \leq r \}$. For a matrix $A \in \mathbb{R}^{n \times n}$, $A \succeq 0$ and $A \succ 0$ imply that $A$ is positive semi-definite and positive definite respectively.

\vspace{-0.3cm}
\section{KNODE Ensembles} \label{section:knode}
\vspace{-0.1cm}
\subsection{Knowledge-based Neural Ordinary Differential Equations}
\vspace{-0.1cm}
For many electro-mechanical and robotic systems, it is relatively straightforward to obtain a model of the system, using first principles and prior knowledge. However, in the presence of uncertainty and modeling errors, it is challenging to assess if this derived model is sufficiently accurate, especially if it is applied within a model-based control framework such as nonlinear MPC. On the other hand, when the true system is operating in its specified environment, we can collect useful data that inform us about its underlying dynamics. The KNODE framework (\cite{jiahao2021knowledge}) provides a systematic approach to fuse the model obtained from first principles with the collected data. In this framework, we consider the following continuous-time representation of the true system,
\vspace{-0.1cm}
\begin{equation} \label{eq:cts_model}
\dot{{x}} = {f}_c(x,u) = \hat{{f}_c}({x},{u}) + d_c({x},{u}),
\end{equation}
where the subscript $c$ denotes the continuous-time nature and is used to distinguish these models from their discretized counterparts. The function $\hat{{f}_c}({x},{u})$ is a nominal model derived from prior knowledge. The dynamics of the true system ${f}_c(x,u)$ differs from that of the nominal model and this difference is represented by the unknown, residual dynamics, $d_c({x},{u})$. We parameterize these unknown dynamics using a NODE with parameters $\theta$, denoted by $d_{\theta,c}({x},{u})$. The model $\hat{f}_c(x,u) + d_{\theta,c}({x},{u})$ is jointly known as the KNODE model. To train this model, the first step is to collect a dataset $\mathcal{D}_{train}:=\left\{({x}_{i}, {u}_{i})\right\}_{i=1}^M$ that consists of $M$ data samples. Each of these samples contains the state measurement ${x}_i$ and the control input ${u}_i$, sampled at times $\{t_i\}_{i=1}^M$. For each data sample in $\mathcal{D}_{train}$ and with the nominal model $\hat{{f_c}}$, we generate one-step state predictions,
\vspace{-0.2cm}
\begin{equation} \label{eq:one_step}
    \Hat{{x}}_{{i+1}} = {x}_{i} +  \int^{t_{i+1}}_{t_i} \hat{f_c}({x}_{i},{u}_{i}) + d_{\theta,c}({x}_{i}, {u}_{i})\, dt,
\vspace{-0.3cm}
\end{equation}
where the integral is computed with a numerical method, \textit{e.g.}, the explicit $4^{\text{th}}$ order Runge-Kutta (RK4) method. Next, we define a loss function that characterizes the mean-squared errors across the predicted and measured states, 
\vspace{-0.5cm}
\begin{equation}
    \mathcal{L}({\theta}) := \frac{1}{M-1}\sum^{M}_{i=2}\left\|\hat{{x}}_{i} - {{x}}_{i}\right\|^2.
\vspace{-0.3cm}
\end{equation}

To obtain a set of optimal parameters ${\theta}^{\star}$, the model is put through a backpropagation procedure, where the gradients of the loss function with respect to the parameters are computed with automatic differentiation and applied in an optimization routine, Adam (\cite{Kingma2015AdamAM}). After training, we obtain a KNODE model, $\hat{{f_c}}({x},{u}) + d_{{\theta}^{\star},c}({x},{u})$. It has been shown that a single KNODE model provides predictions of higher accuracy as compared to the nominal model, as described in \cite{chee2022knode} and \cite{ jiahao2022online}. In this work, we propose an enhancement by learning an ensemble of KNODE models, instead of a single model.

\vspace{-0.2cm}
\subsection{Deep Ensembles} \label{section:weights}
\vspace{-0.1cm}
To improve upon the prediction accuracy and to reduce the variance of the predictions, we introduce a further enhancement to the KNODE model. Inspired by deep ensembling methods such as those described by \cite{lakshminarayanan2017simple} and \cite{wilson2020bayesian}, we construct a deep ensemble of KNODE models. It has been shown that ensembles provide improvements across a variety of classification and regression tasks in a number of application areas, \textit{e.g.}, see reviews in \cite{dietterich2000ensemble}, \cite{ganaie2021ensemble}. In our proposed framework, we adopt a randomized approach where each model in the KNODE ensemble is trained with the same training dataset, with the parameters in each model initialized at random. To promote diversity in the ensemble, we allow the models to have different network architectures. The ensemble residual dynamics is obtained by computing a weighted average across the individual models, \textit{i.e.}, $d_{ens,c} := \sum_{j=1}^{L} \alpha_j d_{\theta^{\star}_j,c}$, where $L$ is the number of models in the ensemble and $\{\alpha_i\}_{j=1}^L$ are weights for the models in the ensemble, with $\sum_j \alpha_j = 1$. This formulation can also be interpreted as a form of stacked regression (\cite{ganaie2021ensemble}). Two variants of the KNODE ensemble are proposed. In the first variant, equal weights are assigned to each model in the ensemble. For the second variant, we compute a set of weights for the models through an optimization problem. Specifically, we formulate and solve the following problem across a hold-out dataset, $\mathcal{D}_{v} := \{(x_i,u_i)\}_{i=1}^{M_{v}}$,
\vspace{-0.2cm}
\begin{subequations} \label{eq:nnls}
\small
\begin{align}
    \underset{\alpha_1,\dotsc,\alpha_L}{\textnormal{minimize}}\quad\; &\frac{1}{M_{v}-1}\sum^{M_{v}}_{i=2} || x_i - \sum^{L}_{j=1} \alpha_j \Hat{x}_{i,j}||^2\\
    \text{subject to}\quad\; & \Hat{{x}}_{i+1,j} = {x}_{i} +  \int^{t_{i+1}}_{t_{i}} \hat{f_c}({x}_{i},{u}_{i}) + d_{{\theta_j^{\star},c}}({x}_{i}, {u}_{i})\, dt,\;\; \forall i=1,\dotsc,M_{v}-1, \label{eq:nnls_integral}\\
    \quad\; \vspace{-0.2cm} &\sum_{j=1}^L \alpha_j = 1. \label{eq:norm_constraint}
\end{align}
\vspace{-0.5cm}
\end{subequations}

\noindent where $\Hat{x}_{i,j}$ denotes the state predicted by the $j^{\text{th}}$ model in the ensemble for the $i^{\text{th}}$ data sample in $\mathcal{D}_{v}$. Similar to \eqref{eq:one_step}, the integral in \eqref{eq:nnls_integral} is computed using a numerical solver. The hold-out set $\mathcal{D}_v$ is obtained by splitting the collected dataset into $\mathcal{D}_{train}$ and $\mathcal{D}_v$. 
\vspace{-0.2cm}
\section{Learning-based Nonlinear MPC with KNODE Ensembles} \label{section:knode_ensemble_mpc}
\vspace{-0.1cm}
We integrate the trained KNODE ensemble into a learning-enhanced nonlinear MPC framework. In particular, we replace the dynamics model in \eqref{eq:mpc_model_intro} with a discretized version of the KNODE ensemble. The resulting finite horizon optimal control problem is written as 
\vspace{-0.2cm}
\begin{subequations} \label{eq:ftocp_knode_ensemble}
\small
\begin{align}
    J^{\star}(x(k)) \;= \;\underset{U}{\textnormal{minimize}}\quad\; & \sum_{i=0}^{N-1} q\left(x_i, u_i\right) + p(x_N) \label{eq:mpc_cost}\\
    \text{subject to}\quad\; &x_{i+1} = \hat{f}(x_i, u_i) + d_{ens}^{\star}(x_i,u_i) , \quad \forall\, i=0,\dotsc,N-1, \label{eq:knode_mpc_model}\\
    \; &x_i \in \mathcal{X}, \quad u_i \in \mathcal{U}, \quad \forall\, i=0,\dotsc,N-1,\\
    \; &x_N \in \mathcal{X}_f, x_0 = x(k),
\end{align}
\end{subequations}
where $\hat{f}(x,u) + d_{ens}^{\star}(x,u) := \hat{f}(x,u) + \sum_{j} \alpha^{\star}_j \,d_{\theta^{\star}_j}(x,u)$ denotes the discretized version of the trained KNODE ensemble with optimal parameters $\theta^{\star}$ and $\alpha_j^{\star}$, $j\in\{1,\dotsc,L\}$. The control law is synthesized using a similar approach as described in Section \ref{section:problem_formulation}. Upon solving \eqref{eq:ftocp_knode_ensemble}, we obtain a sequence of optimal control inputs, $U^{\star} := \{u_0^{\star},\dotsc,u_{N-1}^{\star}\}$. The first element in this optimal control sequence $u_0^{\star}(x(k))$ is applied to the system as the control action and this proceeds in a receding horizon manner. Next, we examine the stability properties of the learning-based MPC framework by showing that there exists a practical choice of $q(x,u)$, $p(x)$ and $\mathcal{X}_f$ that guarantees asymptotic stability for the closed-loop system.

\vspace{-0.2cm}
\subsection{Conditions for Asymptotic Stability}
\vspace{-0.2cm}
For a compact set $\mathcal{X} \subset \mathbb{R}^n$ containing the origin, a function $V : \mathbb{R}^n \rightarrow \mathbb{R}_+$ is a Lyapunov function in $\mathcal{X} \subseteq \mathbb{R}^n$ for the system $f_{cl}(x)$ if 
\vspace{-0.2cm}
\begin{equation} \label{eq:Lyapunov_function}
\small
\begin{split}
    &V(0) = 0,\quad V(x) > 0,\quad \forall x \in \mathcal{X} \setminus \{0\},\\
    & V(f_{cl}(x)) - V(x) \leq -\alpha(||x||),\quad \forall x \in \mathcal{X} \setminus \{0\},
\end{split}
\end{equation}
where $\alpha : \mathbb{R}^n \rightarrow \mathbb{R}$ is a continuous, positive definite function. It is well known that the existence of a Lyapunov function implies that the origin is asymptotically stable for the system in $\mathcal{X}$ (\cite{borrelli2017predictive, chellaboina2008nonlinear}). In this work, we show that there exists a practical choice of the stage and terminal costs $q(x,u),\,p(x)$, and terminal state constraint set $\mathcal{X}_f$ such that the closed-loop system, under the proposed learning-based MPC scheme, is asymptotically stable. Before describing this design choice, we first state a set of conditions, conditions (a) through (e) in the following theorem, such that if they are fulfilled, then the value function in \eqref{eq:ftocp_knode_ensemble}, $J^{\star}(x(k))$, is a valid Lyapunov function and consequently, asymptotic stability is achieved. After which, we relate this design choice with these conditions. The following is a standard result that guarantees asymptotic stability for the closed-loop system.

\vspace{-0.2cm}
\begin{theorem} [\cite{borrelli2017predictive, grune2017nonlinear}] \label{theorem1}
\noindent Suppose that

$\;$ a) $\,$ q(x,u) and p(x) are continuous and positive definite functions,

$\;$ b) $\,$The set $\mathcal{X}_f \subseteq \mathcal{X}$ is closed and contains the origin in its interior,

$\;$ c) $\,$ $v(x) \in \mathcal{U},\quad\forall x \in \mathcal{X}_f$,

$\;$ d) $\,$ $f(x,\,v(x)) \in \mathcal{X}_f,\quad\forall x \in \mathcal{X}_f$,

$\;$ e) $\,$ $p(f(x,\,v(x))) - p(x) + q(x,\,v(x)) \leq 0,\quad\forall x \in \mathcal{X}_f$,

\noindent where the functions $q(x,u),\,p(x),\,f(x,u)$, sets $\mathcal{X},\,\mathcal{U},\,\mathcal{X}_f$ are as described in \eqref{eq:ftocp_knode_ensemble} and $v(x)$ is a chosen local control law. Then, the value function $J^{\star}(x(k))$ satisfies the conditions in \eqref{eq:Lyapunov_function} and the origin of the closed-loop system $f_{cl}(x)$ is asymptotically stable in $\mathcal{X}_0$, where $\mathcal{X}_0$ is the set of initial states $x(k)$ for which the optimization problem \eqref{eq:ftocp_knode_ensemble} is feasible. 
\end{theorem}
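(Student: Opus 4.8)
The plan is to verify directly that the optimal value function $J^\star$ satisfies the three defining properties of a Lyapunov function in \eqref{eq:Lyapunov_function}, after which the cited result on Lyapunov functions delivers asymptotic stability on the feasible set $\mathcal{X}_0$. Throughout, I read the dynamics $f$ appearing in conditions (c)--(e) as the prediction model used in \eqref{eq:ftocp_knode_ensemble}, namely $\hat{f}+d_{ens}^\star$, so that $f_{cl}(x)=f(x,u_0^\star(x))$ is exactly the closed loop generated by the receding-horizon controller, and the stability claim is a nominal one with respect to this model.

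First I would establish positive definiteness of $J^\star$. Evaluating condition (e) at the origin forces $q(0,v(0))=0$ and $p(f(0,v(0)))=0$, so by positive definiteness of $q$ and $p$ (condition a) we get $v(0)=0$ and $f(0,0)=0$; hence the origin is an equilibrium and the all-zero state-and-input trajectory is admissible at $x=0$ (using $0\in\mathcal{X}_f$ from condition b and $v(0)\in\mathcal{U}$ from condition c), giving $J^\star(0)=0$. For $x\neq 0$ the first-stage cost $q(x,u_0^\star)$ is already strictly positive, so $J^\star(x)>0$; moreover positive definiteness of $q$ supplies a positive definite lower bound $q(x,u)\ge\alpha(\|x\|)$, which I reserve for the decrease step.

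The crux is the decrease condition, which I would prove by the standard suboptimal-candidate argument. Let $U^\star=\{u_0^\star,\dotsc,u_{N-1}^\star\}$ be optimal at $x$ with predicted trajectory $\{x_0=x,x_1,\dotsc,x_N\}$, so that $x^+=x_1=f_{cl}(x)$. I construct the shifted, extended input sequence $\tilde{U}=\{u_1^\star,\dotsc,u_{N-1}^\star,v(x_N)\}$ at $x^+$, whose trajectory is $\{x_1,\dotsc,x_N,x_{N+1}\}$ with $x_{N+1}=f(x_N,v(x_N))$. Feasibility of $\tilde{U}$ is where conditions (b)--(d) do the work: the first $N-1$ stages inherit feasibility from $U^\star$, while $x_N\in\mathcal{X}_f$ together with $v(x_N)\in\mathcal{U}$ (condition c) and $x_{N+1}\in\mathcal{X}_f\subseteq\mathcal{X}$ (condition d) certify the appended stage and the terminal constraint. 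Since $\tilde{U}$ is feasible but in general suboptimal, $J^\star(x^+)\le J(\tilde{U},x^+)$, and telescoping the stage costs yields $J^\star(x^+)-J^\star(x)\le [\,p(x_{N+1})-p(x_N)+q(x_N,v(x_N))\,]-q(x,u_0^\star)$. Condition (e), evaluated at $x_N\in\mathcal{X}_f$, renders the bracketed term nonpositive, leaving $J^\star(x^+)-J^\star(x)\le -q(x,u_0^\star)\le-\alpha(\|x\|)$, which is precisely the required decrease.

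I expect the main obstacle to be the careful verification that the candidate sequence $\tilde{U}$ is admissible for the problem at $x^+$ — in particular that appending the local control law $v$ keeps the final predicted state inside the control-invariant terminal set $\mathcal{X}_f$ — since every subsequent inequality hinges on $\tilde{U}$ being feasible and on $\mathcal{X}_f$ behaving as a control-invariant set equipped with a terminal cost that acts as a local control Lyapunov function. Once the three Lyapunov properties are in hand, invoking the cited Lyapunov theorem to conclude asymptotic stability of the origin on $\mathcal{X}_0$ is routine.
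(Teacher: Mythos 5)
Your argument is the standard shifted-candidate proof and it matches, in expanded form, what the paper actually does: the paper's own proof is essentially a citation of Theorem 12.2 in \cite{borrelli2017predictive} (whose content is exactly your recursive-feasibility-plus-telescoping argument, with the linear model replaced by the KNODE ensemble dynamics), so the positive-definiteness step, the construction of $\tilde{U}=\{u_1^{\star},\dotsc,u_{N-1}^{\star},v(x_N)\}$, the use of (b)--(d) for feasibility, and the use of (e) to cancel the appended terminal terms are all correct and are the intended route. Your reading of $f$ in conditions (c)--(e) as the prediction model $\hat{f}+d_{ens}^{\star}$, so that the stability claim is nominal with respect to that model, is also the only reading under which $x^{+}=x_1$ and the telescoping step is valid; the paper makes the same implicit identification.

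There is one genuine omission relative to the paper's proof. The conditions you verify --- $J^{\star}(0)=0$, $J^{\star}(x)>0$ for $x\neq 0$, and the decrease $J^{\star}(x^{+})-J^{\star}(x)\leq-\alpha(\|x\|)$ --- are not by themselves enough to conclude asymptotic stability of the origin: without an upper bound on $J^{\star}$ near the origin (equivalently, continuity of $J^{\star}$ at $0$), the decrease condition gives attractivity but Lyapunov stability can fail, because sublevel sets of $J^{\star}$ need not be neighborhoods of the origin. The standard fix is to note that conditions (c)--(e) imply $J^{\star}(x)\leq p(x)$ for all $x\in\mathcal{X}_f$ (apply the local law $v$ for $N$ steps and telescope (e)), which bounds $J^{\star}$ above by a continuous positive definite function on a neighborhood of the origin; the paper then invokes Propositions 2.15 and 2.16 of \cite{rawlings2017model} (local boundedness of $J^{\star}$ plus continuity at the origin) to extend the conclusion from $\mathcal{X}_f$ to all of $\mathcal{X}_0$. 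This upper-bound step is the only substantive content the paper adds beyond the cited textbook argument, and it is the one piece your proposal does not address.
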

\vspace{-0.1cm}
The proof of Theorem \ref{theorem1} is listed in the Appendix for completeness. Quadratic cost functions, \textit{i.e.}, $q(x,u) := x^{\top}Qx + u^{\top}Ru$, and $p(x) := x^{\top}Px$, are commonly used in the formulation of \eqref{eq:ftocp_knode_ensemble}. If $Q,\,R$ and $P \succ 0$, condition (a) is satisfied. While there is some flexibility in the choice of $Q$ and $R$, we show that a particular choice of terminal cost matrix $P$, terminal state constraint set $\mathcal{X}_f$, and local control law $v(x)$ ensure that conditions (b) through (e) are fulfilled. The underlying idea is to utilize the fact that the dynamics of a nonlinear system and its linearized dynamics are not too different in some neighborhood of the origin. This concept is not new and has been studied in the form of receding horizon control, synonymous to MPC, for nonlinear constrained systems (see \cite{chen1998quasi, magni2001stabilizing, kwon2005receding, rawlings2017model}). However, there are nuances in some of these approaches, which may make them prohibitive for implementation. For instance, in \cite{magni2001stabilizing}, the terminal cost is a function of the parameters characterizing the exponential stability of the linearized system. In \cite{chen1998quasi}, continuous-time systems are considered. In this work, we consider discrete-time nonlinear systems and provide an alternative perspective of the conditions in the Lyapunov equation, differing from those in \cite{kwon2005receding} (cf. (7.23) and (7.28) therein). In Section \ref{section:case_studies}, we further show that this choice of $P$ and $\mathcal{X}_f$, together with our proposed framework, can be implemented and performs well in practice. 

We consider a linearization of the system in \eqref{eq:dynamics} about the origin. Assuming that $f(0,\,0)=0$, the linear dynamics can be written as
\begin{equation} \label{eq:linear_dynamics}
\small x^{+} = \nabla f_x(x,u) \big|_{(x,u)=(0,0)} + \nabla f_u(x,u) \big|_{(x,u)=(0,0)}:= Ax + Bu,
\end{equation}
where the pair $(A,B)$ is assumed to be stabilizable. Next, we choose a linear feedback controller, $u_{l}(x) := Kx$, such that the linear closed-loop system, $x^+ = Ax + Bu_{l}(x) = (A+BK)x := A_{cl} x$, is exponentially stable. It is important to note that while this linear gain $K$ is chosen and required for the computation of the terminal cost matrix $P$, it is not required for the control law in MPC. In particular, by considering the local control law $v(x)$ in Theorem \ref{theorem1} to be $u_{l}(x)$, and together with $q(x,u)$, $p(x)$ and $\mathcal{X}_f$ as described below, it can be shown that condition (e) in Theorem \ref{theorem1} is fulfilled. This is given by the following proposition.

\begin{proposition} \label{proposition1}
Consider the stage and terminal costs to be $q(x,u) = x^{\top} Q x + u^{\top} R u$, $p(x) := x^{\top}Px$, where $Q,\,R\succ 0$, and $P$ is the solution to the following Lyapunov equation,
\begin{equation} \label{eq:Lyap_eqn}
    A_{cl}^{\top} P A_{cl} + \rho (Q + K^{\top} R K) - P = 0,
\end{equation}
where $\rho \in (1,\infty)$. Let the terminal constraint set $\mathcal{X}_f$ be a $\gamma$-sublevel set of $p(x)$, \textit{i.e.}, $\mathcal{X}_f := \{x\in\mathbb{R}^n \,|\, p(x) \leq \gamma\}$, with $\gamma > 0$. 
Then, it holds that
\begin{equation} \label{eq:nonlinear_descent}
    p(f(x,\,Kx)) - p(x) \leq -x^{\top}(Q + K^{\top} R K) x,\quad\forall x \in \mathcal{X}_f
\end{equation}
\end{proposition}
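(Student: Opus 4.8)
The plan is to exploit the fact that the local control law $v(x)=Kx$ renders the \emph{linearized} closed-loop dynamics exponentially stable, with the Lyapunov equation \eqref{eq:Lyap_eqn} encoding a strict descent condition, and then to argue that the nonlinear dynamics inherits a (slightly weaker) descent property on a sufficiently small sublevel set, where the nonlinear and linear closed-loop maps nearly coincide. First I would introduce the closed-loop map $\phi(x):=f(x,Kx)$ and Taylor-expand it about the origin. Since $f$ is twice continuously differentiable and $f(0,0)=0$, we have $\phi(0)=0$, and by the chain rule the Jacobian of $\phi$ at the origin equals $A+BK=A_{cl}$, with $A,B$ as in \eqref{eq:linear_dynamics}. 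Hence $\phi(x)=A_{cl}x+g(x)$, where the remainder satisfies $g(0)=0$, $\nabla g(0)=0$, and therefore $\|g(x)\|\le c\|x\|^2$ for some constant $c>0$ on a ball $B_{r_0}$ about the origin.

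Next I would substitute this expansion into $p(\phi(x))=\phi(x)^\top P\,\phi(x)$ and expand, producing $x^\top A_{cl}^\top P A_{cl}\,x$ plus a cross term and a quadratic remainder term. Using \eqref{eq:Lyap_eqn} to replace $A_{cl}^\top P A_{cl}=P-\rho(Q+K^\top R K)$, the leading term becomes $p(x)-\rho\,x^\top(Q+K^\top R K)x$, so that
\[
p(\phi(x)) - p(x) = -\rho\, x^\top(Q + K^\top R K)x + 2\, x^\top A_{cl}^\top P\, g(x) + g(x)^\top P\, g(x).
\]
The target inequality \eqref{eq:nonlinear_descent} is then equivalent to bounding the two remainder terms above by $(\rho-1)\,x^\top(Q+K^\top R K)x$.

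This domination argument is the crux of the proof. Because $\rho>1$ and $Q+K^\top R K\succ 0$ (as $Q\succ 0$ and $R\succ 0$), the bound we need to beat is a positive definite quadratic form, bounded below by $(\rho-1)\lambda_{\min}(Q+K^\top R K)\|x\|^2$, which is of order $\|x\|^2$. By contrast, applying the Cauchy--Schwarz inequality together with $\|g(x)\|\le c\|x\|^2$ shows the remainder terms are of order $\|x\|^3$. Dividing through by $\|x\|^2$, the cubic-and-higher contributions vanish as $\|x\|\to 0$, so there exists a radius $r\in(0,r_0]$ such that the inequality holds for all $x\in B_r$. The main obstacle is precisely here: carefully tracking the constants $\|A_{cl}^\top P\|$, $\|P\|$, $c$, and $\lambda_{\min}(Q+K^\top R K)$ to certify an explicit $r$; no subtlety beyond this remains, and the strictly-greater-than-one slack $\rho$ is exactly what creates the quadratic margin needed to absorb the higher-order error.

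Finally I would transfer the result from $B_r$ to the terminal set. Since $p(x)\ge\lambda_{\min}(P)\|x\|^2$, every $x\in\mathcal{X}_f=\{x\mid p(x)\le\gamma\}$ satisfies $\|x\|\le\sqrt{\gamma/\lambda_{\min}(P)}$; choosing $\gamma$ sufficiently small, specifically $\gamma\le r^2\lambda_{\min}(P)$, guarantees $\mathcal{X}_f\subseteq B_r$, so \eqref{eq:nonlinear_descent} holds on all of $\mathcal{X}_f$. I would remark that this is also where the proposition's freedom in selecting $\gamma>0$ is used, and that the same $P$ and $K$ simultaneously deliver condition (e) of Theorem \ref{theorem1} since the right-hand side of \eqref{eq:nonlinear_descent} is exactly $-q(x,Kx)$.
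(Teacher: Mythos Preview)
Your proposal is correct and follows essentially the same approach as the paper's proof: define the closed-loop remainder $e(x)=f(x,Kx)-A_{cl}x$, use the Lyapunov equation to extract the $-\rho\,x^\top(Q+K^\top RK)x$ term, bound the cross and quadratic remainder terms by $O(\|x\|^3)$ via a second-order Taylor estimate (the paper isolates this as a separate lemma), and then shrink $\gamma$ so that $\mathcal{X}_f$ sits inside the ball where the $(\rho-1)$ slack dominates. The only cosmetic difference is that the paper first writes the linear descent identity $p(A_{cl}x)-p(x)=-\rho\,x^\top(Q+K^\top RK)x$ and then bounds $p(f(x,Kx))-p(A_{cl}x)$, whereas you expand $p(\phi(x))-p(x)$ directly; the resulting inequalities and the choice of $\gamma$ are identical.
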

To prove Proposition \ref{proposition1}, we make use of a lemma that provides an upper bound on the ratio of the norm of a twice continuously differentiable function and the norm of its argument. This lemma with its proof, together with the proof of Proposition \ref{proposition1}, are given in the Appendix.

\vspace{-0.2cm}
\begin{corollary} \label{corollary}
Given $p(x)$ and $\mathcal{X}_f$ defined in Proposition \ref{proposition1}, together with the descent condition in \eqref{eq:nonlinear_descent}, conditions (b) and (d) in Theorem \ref{theorem1} are satisfied. 
\end{corollary}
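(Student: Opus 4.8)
The plan is to check conditions (b) and (d) in turn, handling the topological properties of $\mathcal{X}_f$ directly from the fact that $p$ is a positive definite quadratic form, and deducing the forward-invariance in (d) from the descent inequality \eqref{eq:nonlinear_descent}.

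First I would establish condition (b). Since $P \succ 0$, the terminal cost $p(x) = x^{\top} P x$ is continuous, so $\mathcal{X}_f = \{x \mid p(x) \leq \gamma\}$ is the preimage of the closed interval $(-\infty,\gamma]$ under a continuous map and is therefore closed. For the origin to lie in the interior, I would invoke the spectral bounds $\lambda_{\min}(P)\|x\|^2 \leq p(x) \leq \lambda_{\max}(P)\|x\|^2$: setting $r = \sqrt{\gamma/\lambda_{\max}(P)} > 0$, any $x$ with $\|x\| \leq r$ satisfies $p(x) \leq \lambda_{\max}(P)\, r^2 = \gamma$, so the closed ball $B_r \subseteq \mathcal{X}_f$ and the origin is interior. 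For the inclusion $\mathcal{X}_f \subseteq \mathcal{X}$, I would use that $\mathcal{X}$ is a neighborhood of the origin, so $B_\delta \subseteq \mathcal{X}$ for some $\delta > 0$; choosing $\gamma \leq \lambda_{\min}(P)\,\delta^2$ then forces $\lambda_{\min}(P)\|x\|^2 \leq p(x) \leq \gamma$, i.e. $\|x\| \leq \delta$, for every $x \in \mathcal{X}_f$, which gives $\mathcal{X}_f \subseteq B_\delta \subseteq \mathcal{X}$.

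Next I would establish condition (d), which with the local control law $v(x) = Kx$ reduces to showing $f(x,Kx) \in \mathcal{X}_f$ for all $x \in \mathcal{X}_f$. This follows immediately from \eqref{eq:nonlinear_descent}: rearranging the descent inequality yields $p(f(x,Kx)) \leq p(x) - x^{\top}(Q + K^{\top} R K)x$, and since $Q,R \succ 0$ we have $Q + K^{\top} R K \succeq Q \succ 0$, so the subtracted term is nonnegative. Hence $p(f(x,Kx)) \leq p(x) \leq \gamma$ whenever $p(x) \leq \gamma$, which is precisely the statement $f(x,Kx) \in \mathcal{X}_f$. Thus $\mathcal{X}_f$ is forward invariant under the local closed-loop map, as required.

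The only step that is not purely mechanical is the inclusion $\mathcal{X}_f \subseteq \mathcal{X}$ in (b), and this is where I expect the main care to be needed: it holds not for every $\gamma > 0$ but only once $\gamma$ is taken small enough, and the argument tacitly requires the origin to sit in the \emph{interior} of the state constraint set $\mathcal{X}$ rather than merely belonging to it. The forward-invariance in (d), by contrast, is a direct algebraic consequence of the descent condition together with the positive definiteness of $Q + K^{\top} R K$, so no additional work is anticipated there.
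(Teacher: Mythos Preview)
Your proposal is correct and follows essentially the same approach as the paper. The paper's own proof is extremely terse---it dismisses condition~(b) as holding ``by construction'' and then derives condition~(d) from the descent inequality exactly as you do, via $p(f(x,Kx)) \leq p(x) \leq \gamma$---so your treatment of (b) is in fact more detailed than the paper's, and your caveat about $\mathcal{X}_f \subseteq \mathcal{X}$ requiring $\gamma$ small enough is well placed (the paper handles this implicitly through the choice of $\gamma$ in the proof of Proposition~\ref{proposition1} rather than in the corollary itself).
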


\vspace{-0.3cm}
The proof for Corollary \ref{corollary} is given in the Appendix. For condition (c) in Theorem \ref{theorem1}, it is shown in \cite{magni2001stabilizing} that there exists a $\delta_1 \in (0,\infty)$ such that for any $x \in \mathcal{X}$, $Kx \in \mathcal{U}$, for all $x \in B_{\delta_1}$. Hence, to satisfy all the conditions in Theorem \ref{theorem1}, we consider an additional criterion to the selection of $\epsilon$ in the proof of Proposition \ref{proposition1} such that $\epsilon \in (0,\min\{\delta,\,\delta_1\}]$ and choose the $\gamma$-sublevel set accordingly. Through the case studies in Section \ref{section:case_studies}, we show that this choice of $\mathcal{P}$ and $\mathcal{X}_f$ given in Proposition \ref{proposition1} can be readily implemented, which guarantees asymptotic stability of the closed loop system under the proposed learning-enhanced MPC framework. We note that the conditions given above are not only valid for the proposed learning-based MPC scheme, but they can also be applied to other suitably formulated nonlinear MPC schemes. This approach for the selection of the terminal cost and constraint set may be more conservative than some recent results, \textit{e.g.}, \cite{kohler2019nonlinear, eyubouglu2022snmpc}. Examining and comparing with these results is a direction for future work. 

\vspace{-0.3cm}
\subsection{Enabling Efficient Implementation}
\vspace{-0.2cm}
Due to the nonlinear nominal dynamics and KNODE ensemble, \eqref{eq:ftocp_knode_ensemble} is a nonlinear constrained optimization problem. To ensure that it is solved in an efficient manner, we incorporate a number of measures in our implementation. We formulate \eqref{eq:ftocp_knode_ensemble} in CasADi (\cite{Andersson2019}) as a parametric optimization problem. This allows the solver to retain a fixed problem structure during deployment. A solver based on the interior-point method, IPOPT (\cite{wachter2006implementation}), is used to solve the optimization problem. It is warm-started at each time step by providing it with an initial guess, based on the solution from the previous time step. Since NODEs are used to learn only the residual and not the full dynamics, the trained neural networks are relatively lightweight in both width and depth. This further reduces the computational efforts required to solve \eqref{eq:ftocp_knode_ensemble}. While the evaluation of the models in the KNODE ensemble are computationally cheaper as compared to training them, there is a trade-off between the number of models in the ensemble and the required computational efforts. In Section \ref{section:case_studies}, we show that a reasonably small number of models improves the prediction accuracy of the dynamics model.

\vspace{-0.3cm}
\section{Case Studies} \label{section:case_studies}
\vspace{-0.2cm}
We consider two case studies to illustrate the flexibility and efficacy of the proposed learning-enhanced nonlinear MPC framework.

\vspace{-0.4cm}
\subsection{Inverted Pendulum}
\vspace{-0.2cm}
We first consider an inverted pendulum system with dynamics given by
(\cite{brockman2016openai}),
\vspace{-0.2cm}
\begin{equation} \label{eq:pendulum_eqns}
    \ddot{\theta} = 3g\sin(\theta)/(2l) + 3\tau/(ml^2),
\vspace{-0.2cm}
\end{equation}
where $\theta$ is the angle made by the pendulum and the vertical, $m$ and $l$ are the mass and length of the pendulum, $g$ is the gravitational force, and $\tau$ is the external torque acting on the pendulum. We define the state and control input to be $x :=[\theta\; \dot{\theta}]^{\top}$ and $u:=\tau$. To ascertain the learning ability of the KNODE ensemble, residual dynamics are introduced in the form of a mass difference. In particular, we consider the mass in the nominal model to be 1kg, while the mass of the true pendulum is 0.55kg. The inverted pendulum is required to track a time-varying reference trajectory defined by a sequence of step angle commands. The dynamics are simulated using an explicit RK45 method with a sampling time of 0.01s. To get the terminal cost matrix $P$, we linearize and discretize \eqref{eq:pendulum_eqns}, and compute the matrix $K$ that optimizes the discrete-time linear-quadratic regulator for the chosen values of $Q$ and $R$. We then solve the discrete-time Lyapunov equation as described in Proposition \ref{proposition1} to get $P$ and subsequently, $\mathcal{X}_f$. For this system, we set the parameter $\gamma$ to $0.01$. More details on training, the ensemble architecture and controller parameters are provided in the Appendix.

\vspace{-0.35cm}
\subsection{Quadrotor System}
\vspace{-0.2cm}
Next, we consider a high-dimensional quadrotor system. The dynamics of the system are given as (\cite{mellinger2011minimum})
\vspace{-0.3cm}
\begin{equation}\label{eq:eom}
    m\ddot{{r}} = m{g} + {R} \eta,\quad
    {I}\dot{{\omega}} = {\tau} - {\omega} \times {I} {\omega},
\vspace{-0.2cm}
\end{equation}
where ${r},\,{\omega} \in \mathbb{R}^3$ are the position and angular rate vectors of the quadrotor, $\eta \in \mathbb{R}$, ${\tau} \in \mathbb{R}^3$ are the thrust and moments generated by the motors of the quadrotor. The gravity vector is denoted by ${g}$ and ${R}$ is the transformation matrix that maps $\eta$ to the acceleration of the quadrotor. $m$ and ${I}\in \mathbb{R}^{3\times3}$ are the mass and inertia matrix of the quadrotor. The state and control inputs are defined as ${x} := [{r}^{\top}\;\dot{{r}}^{\top}\;{q}^{\top}\;{\omega}^{\top}]^{\top} \in \mathbb{R}^{13}$ and ${u} := [\eta\; {\tau}^{\top}]^{\top} \in \mathbb{R}^4$, where ${q} \in \mathbb{R}^4$ denotes the quaternions that represents the orientation of the quadrotor. The quadrotor is tasked to track circular trajectories of varying radii and speeds, which implies tracking time-varying position and velocity commands. It is assumed that there is an unknown disturbance force that is a function of the velocities acting on the quadrotor. The cost matrix $P$ and terminal constraint set $\mathcal{X}_f$ are obtained using a similar approach as the first case study. Details on training, the ensemble architecture and controller parameters are listed in the Appendix.

\vspace{-0.4cm}
\section{Results and Discussion} \label{section:results}
\vspace{-0.1cm}
\subsection{Predictions by the KNODE Ensemble}
\vspace{-0.1cm}
For the inverted pendulum, we evaluate the prediction accuracy of the KNODE ensemble by computing the statistics of the mean-squared errors (MSE) across 30 test trajectories. For each of these trajectories, both the initial states and the magnitude of the step commands are selected at random, under an uniform distribution. Statistics of the prediction accuracy are shown in Fig. \ref{fig:stats_pen}. We compare the MSEs across three cases; (i) individual KNODE models, (ii) the first variant, \textit{i.e.}, a KNODE ensemble of equal weights, and (iii) the second variant, \textit{i.e.}, a KNODE ensemble with different weights. Observed from the left panel of Fig. \ref{fig:stats_pen}, the median errors for the two variants are 18.1\% and 56.2\% lower than those given by the predictions across the individual models. 

Next, we analyze the results by comparing the MSEs for each of the test runs, as depicted in the right panel of Fig. \ref{fig:stats_pen}. The MSEs for the average model are computed by taking the average across the models in the ensemble. As observed, the MSEs for the second variant are lower than those for the first variant and the average model. These improvements indicate that the KNODE ensemble provides more accurate predictions and generalizes better to unseen trajectories, since these test runs are different from those in the training data. It is observed from the right panel of Fig. \ref{fig:stats_pen} that the first variant provides lesser MSE reduction, as compared to the second variant. These results illustrate the effectiveness of the weight optimization procedure described in Section \ref{section:weights}. 

For the quadrotor system, we compute MSEs of the states across 30 trajectories, where the commanded speed and radii, and initial states are selected at random. Statistics of the MSEs are given in Fig. \ref{fig:stats_quad}. An improvement is observed for the KNODE ensemble with different weights, as compared to the other two schemes, in terms of the median errors. We further examine the MSEs for each of the 30 runs, as shown in the right panel of Fig. \ref{fig:stats_quad}. While it is observed that both variants of the KNODE ensemble provide improved accuracy and lower MSEs for all runs over the average model, the application of different weights provides MSE reduction in 60\% of the runs, when compared against using equal weights in the ensemble. This can be attributed to the challenge of finding suitable weights for the ensemble as the required complexity of the models increases, particularly for the quadrotor. 
\setlength{\textfloatsep}{-1pt}
\begin{figure}
    \centering
    {\includegraphics[scale=0.4, trim = 0cm 1cm 0cm 0cm]{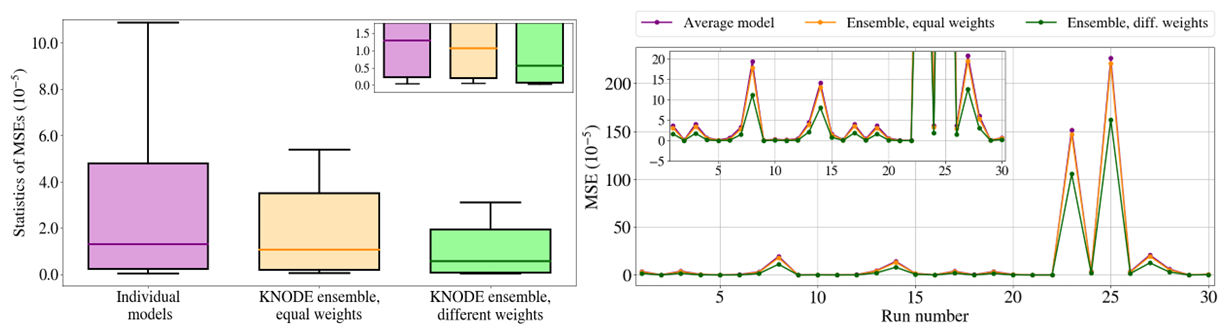}}
    \caption{MSEs for individual KNODE models, KNODE ensembles with equal and different weights across 30 test trajectories, for the inverted pendulum. The left panel depicts the error statistics and the right panel shows the MSEs for each test trajectory. 
    }
    \label{fig:stats_pen}
\end{figure}

\begin{figure}
    \centering
    {\includegraphics[scale=0.4, trim = 0cm 2cm 0cm 1.5cm]{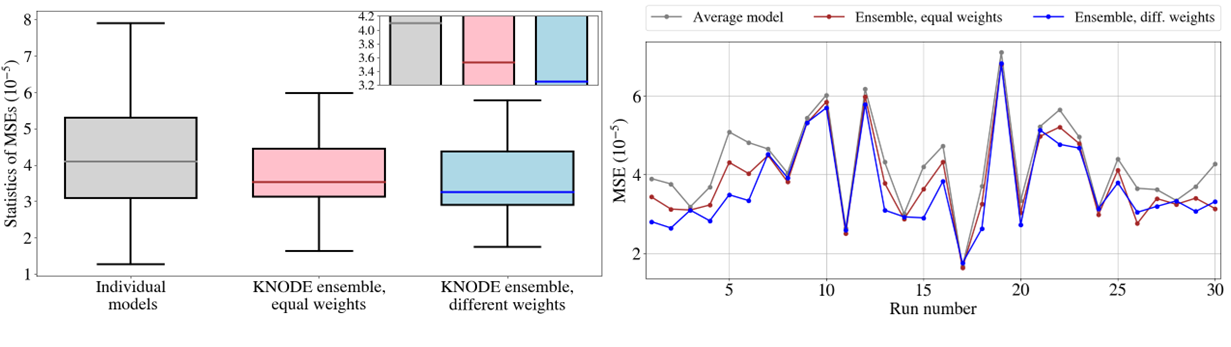}}
    \caption{MSEs for individual KNODE models, KNODE ensembles with equal and different weights across 30 test trajectories, for the quadrotor. The left panel depicts the error statistics and the right panel shows the MSEs for each test trajectory. 
    }
    \label{fig:stats_quad}
\end{figure}

\vspace{-0.3cm}
\subsection{Closed-loop Tracking Performance}
\vspace{-0.2cm}
To evaluate the performance of the inverted pendulum, we consider the time histories of the tracking errors between the reference and closed-loop angle trajectories. We compare the errors under three schemes; learning-enhanced MPC with (i) individual KNODE models, (ii) a KNODE ensemble of equal weights and (iii) a KNODE ensemble with different weights. These schemes are denoted as \emph{KNODE-MPC}, \emph{EnKNODE-MPC, Equal} and \emph{Different}. Fig. \ref{fig:cl_plots} depicts the time histories of the angle tracking errors across test trajectories. It is observed that using \emph{KNODE-MPC} results in higher variance in the steady-state errors, as compared to the two ensemble approaches. Moreover, the median steady-state errors are smaller in magnitude for the MPC schemes that use a KNODE ensemble. This implies that using a KNODE ensemble leads to more consistent closed-loop performance. On the other hand, the number of models required for the ensemble would depend on the dynamics, model complexity and the amount of computational power available. While significant prediction improvements are obtained with the KNODE ensemble with different weights, as shown in Fig. \ref{fig:stats_pen}, we only observe a slight improvement in the error bands for \emph{EnKNODE-MPC, Opt}, as compared to \emph{EnKNODE-MPC, Equal}. This implies that for this inverted pendulum, improvements in model accuracy may not necessarily lead to significant improvements in the closed-loop performance.

For the quadrotor, the task involves following a circular trajectory, which translates to the tracking of time-varying position and velocity commands. The time histories of the position errors for one test case are shown in the left panel of Fig. \ref{fig:cl_quad}. The errors are computed by taking the difference between the quadrotor and reference trajectories. The reference trajectories in the x and y axes are time-shifted for ease of comparison, as there is a time difference between the reference and quadrotor trajectories, due to the time-varying nature of the reference trajectories. As observed, the errors are slightly smaller with the KNODE ensemble in the MPC scheme, as compared to a single model. Next, we evaluate the overall closed-loop performance through the MSE statistics for the positions and velocities across 30 runs and they are shown in the right panel of Fig. \ref{fig:cl_quad}. The position MSEs under \emph{EnKNODE-MPC, Equal} and \emph{EnKNODE-MPC, Different} are lower than \emph{KNODE-MPC} by 3.8\% and 1.7\% and the velocity MSEs are lower by 24.1\% and 10.4\% respectively, illustrating the efficacy of the framework.  
\begin{figure}
    \centering
    {\includegraphics[width=9cm, height=3cm, scale=0.5, trim = 0.5cm 1cm 0cm 0.5cm]{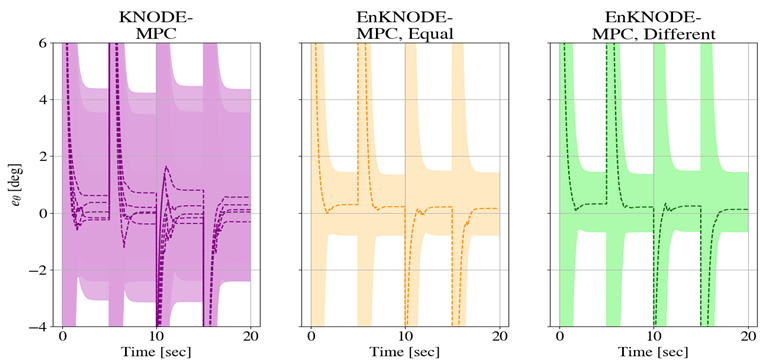}}
    \caption{Time histories of the errors between the reference and angle trajectories for the pendulum, under the three learning-based MPC schemes. The dashed lines represent the time histories of the median across trajectories, while the colored bands indicate the maximum and minimum tracking errors across test runs.}
    \label{fig:cl_plots}
\end{figure}

\begin{figure}
    \centering
    {\includegraphics[scale=0.4, trim = 0.5cm 1cm 0cm 1.5cm]{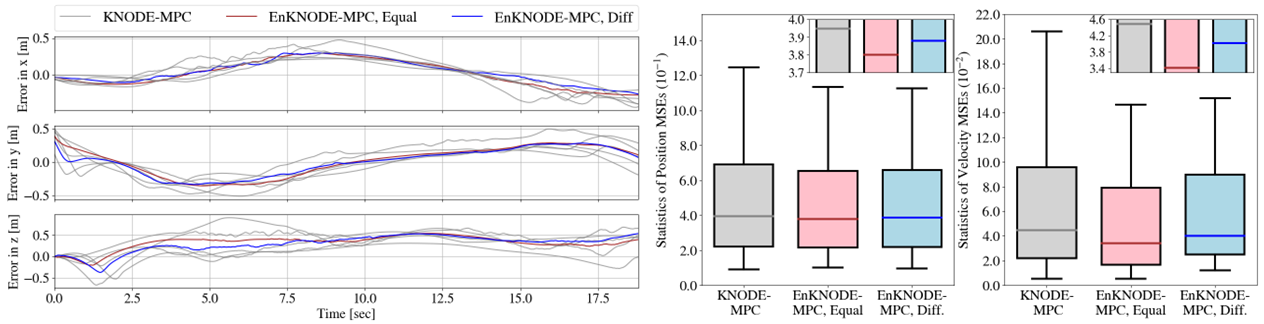}}
    \caption{\textbf{Left}: Time histories of the position errors for the quadrotor for a single test case. For this test case, the MSEs for EnKNODE-MPC, Equal and Diff. are 1.125, while the MSEs for KNODE-MPC with the single models range from 1.128 to 1.260. \textbf{Right}: Statistics of the quadrotor position and velocity tracking MSEs under the three MPC schemes. The insets are zoomed-in plots of the median errors.}
    \label{fig:cl_quad}
\end{figure}

\section{Conclusion}
We present a learning-enhanced nonlinear MPC framework that incorporates concepts in deep learning, namely KNODEs and deep ensembles. For the proposed framework, sufficient conditions that render the closed-loop system asymptotically stable are provided. Through two case studies, we demonstrate that the proposed scheme not only provides more accurate predictions, but also improves the closed-loop performance. For future work, we aim to further reduce the computational efforts required, and improve performance for the framework, by combining ideas from multi-stage nonlinear MPC.


\newpage
\acks{This work is supported by DSO National Laboratories, 12 Science Park Drive, Singapore 118225, NSF IUCRC 1939132, and the Office of Naval Research (ONR) Award No. N00014-22-1-2157.}

\bibliography{root.bib}

\newpage
\section{Appendix}
\subsection{Proof of Theorem \ref{theorem1}}
\begin{proof}
The first part of the proof is analogous to the proof for Theorem 12.2 in \cite{borrelli2017predictive}, which can be obtained by replacing the linear dynamics therein with the nonlinear dynamics of \eqref{eq:knode_mpc_model}. We note that in the proof of this theorem, the upper bound for $J^{\star}(x)$ holds for all $x \in \mathcal{X}_f$. To show that this upper bound is valid for all $x \in \mathcal{X}_0$, we consider Propositions 2.15 and 2.16 in \cite{rawlings2017model}. In particular, Proposition 2.15 states that $J^{\star}(x)$ is locally bounded, \textit{i.e.}, bounded on every compact subset of $\mathcal{X}_0$. Following Proposition 2.16, it can be shown that this local boundedness property of $J^{\star}(x)$, together with the fact that $J^{\star}(x)$ is continuous at the origin, implying the upper bound holds for all $x \in \mathcal{X}_0$.  
\end{proof}

\subsection{Upper bounding the ratio between the norm of a function and its argument}
\begin{lemma} \label{sup_bound_lemma}
Consider a twice continuously differentiable function $e:\mathbb{R}^n \rightarrow \mathbb{R}^n$ and suppose that $||e(0)|| = \left|\left|\nabla e(x)\big|_{x=0}\right|\right| = 0$. For any $\delta > 0$, there exists $e_{\delta} > 0$ such that
\begin{equation}
    \frac{||e(x)||}{||x||} \leq \frac{1}{2} e_{\delta} ||x||, \quad\forall x \in B_{\delta},
\end{equation}
where $B_{\delta} := \{x \in \mathbb{R}^n \,|\,||x|| \leq \delta\}$. 
\end{lemma}
\begin{proof}
Since $e(x)$ is twice continuously differentiable and $B_{\delta}$ is closed and bounded, for any $\delta > 0$, there exists $e_{\delta} > 0$ such that $||\nabla^2 e(x)|| \leq e_{\delta},\;\forall x \in B_{\delta}$. Next, we consider two arbitrary points $x_1,\,x_2 \in B_{\delta}$ and apply a mean value version of the Taylor's theorem, derived from the fundamental theorem of calculus and integration by parts, to get
\begin{equation}
    e(x_1) = e(x_2) + \nabla_x e(x)\Big|_{x=x_2} (x_1-x_2) + \int_0^1 (1-t) (x_1-x_2)^{\top} \nabla^2 e(x)\Big|_{x=x_2+t(x_1-x_2)} (x_1-x_2) dt.
\end{equation}
Setting $x_2 = 0$ and taking its norm gives
\begin{equation}
\begin{split}
||e(x_1)||&=\left|\left|e(0) + \nabla e(x)\Big|_{x=0} x_1 + \int_0^1 (1-t) x^{\top} \nabla^2 e(x)\Big|_{x=tx_1} x_1 \,dt\right|\right|\\
&\leq \left|\left| \int_0^1 (1-t) x_1^{\top} \nabla^2 e(x)\Big|_{x=tx_1} x_1 \,dt \right|\right|\\
&\leq \int_0^1 (1-t) \left|\left| \nabla^2 e(x) \Big|_{x=tx_1} \right|\right| ||x_1||^2\,dt\\
&\leq \frac{1}{2} e_{\delta}||x_1||^2,\quad\forall x_1 \in B_{\delta}.
\end{split}
\end{equation}
Re-arrangement gives the required result.
\end{proof}

\subsection{Proof of Proposition \ref{proposition1}}
\begin{proof}
(Adapted from \cite{rawlings2017model}). First, if $Q,\,R \succ 0$, then $Q + K^{\top} R K \succ 0$, and since $A_{cl}$ is stable, $P \succ 0$. For the linear system \eqref{eq:linear_dynamics}, since $p(x) := x^{\top}Px$, we have
\begin{equation} \label{eq:linear_descent}
    p(A_{cl} x) - p(x) = -\rho x^{\top} (Q+K^{\top} R K) x, \quad\forall x \in \mathbb{R}^n.
\end{equation}
Considering \eqref{eq:nonlinear_descent} and \eqref{eq:linear_descent}, it remains to show that
\begin{equation} \label{eq:mixed_descent}
    p(f(x,\,Kx)) - p(A_{cl} x) \leq (\rho-1)x^{\top}(Q + K^{\top} R K) x,\quad\forall x \in \mathcal{X}_f.
\end{equation}
To this end, we introduce an error term to account for the difference between the nonlinear and linear systems under the linear feedback control law, $e(x) := f(x, Kx) - A_{cl} x$. The left-hand side of \eqref{eq:mixed_descent} can then be re-written as
\begin{equation}
\begin{split}
    p(f(x,\,Kx)) - p(A_{cl} x) &= 2x^{\top}A_{cl}^{\top} P e(x) + e(x)^{\top} P e(x)\\
    &\leq 2||x||||e||||A_{cl}^{\top}P||+||P||||e||^2\\
    &\leq \left(2S_{\delta}||A_{cl}^{\top}P|| + S_{\delta}^2 ||P|| \right)||x||^2,
\end{split}
\end{equation}
where $S_{\delta} := \sup_{x\in B_{\delta}} ||e(x)||/||x||$, as given in Lemma \ref{sup_bound_lemma}. From Lemma \ref{sup_bound_lemma}, we also have $\lim_{{\delta} \rightarrow 0} S_{\delta} =0$. This implies that there exists an $\epsilon \in (0, \delta]$ such that 
\begin{equation} \label{eq:norm_inequality}
    \left(2S_{\delta}||A_{cl}^{\top}P|| + S_{\delta}^2 ||P|| \right)||x||^2 \leq (\rho-1)x^{\top}(Q + K^{\top} R K) x. 
\end{equation}
To ensure that \eqref{eq:norm_inequality} holds for all $x \in \mathcal{X}_f$, we select the terminal state constraint set $\mathcal{X}_f$, in particular the parameter for the sublevel set $\gamma$, such that $||x|| \leq \epsilon$. Given the choice of the stage and terminal costs and $P$ in \eqref{eq:Lyap_eqn}, there exists a $c > 0$ such that
$
    c ||x||^2 \leq q(x,u)\leq p(x) \leq \gamma, \;\forall x \in \mathcal{X}_f.
$
Hence, by choosing $\gamma := c\epsilon^2$, we have $||x|| \leq \epsilon$. For this choice of $\gamma$, \eqref{eq:norm_inequality} holds for all $x \in \mathcal{X}_f$, which gives the required result.
\end{proof}

\subsection{Proof of Corollary \ref{corollary}}
\begin{proof}
The terminal constraint set $\mathcal{X}_f = \{x\in\mathbb{R}^n \,|\, p(x) \leq \gamma\}$ is closed and contains the origin in its interior, by construction. Since $Q,R \succ 0$, the descent condition gives $p(f(x,\,Kx)) \leq p(x) \leq \gamma,\, \forall x \in \mathcal{X}_f$.
This implies that $f(x,\,Kx) \in \mathcal{X}_f,\,\forall x \in \mathcal{X}_f$ and condition (d) is fulfilled.
\end{proof}

\subsection{Training of KNODE Ensemble}
For the inverted pendulum, the training dataset is collected using nominal MPC, and it consists of 2000 data samples and spans 20 seconds. 75\% of these data samples are used to train the KNODE ensemble and 25\% are used as the hold-out set to optimize the weights in the ensemble. For the ensemble, we trained 5 networks. Each network consists of the input and outer layers, and with 1 hidden layer and the hyperbolic tangent activation function. The number of neurons in this hidden layer varies from 64 to 320, with increments of 64 neurons for each model. Training is done in PyTorch (\cite{NEURIPS2019_9015}). Each of these models is trained for 500 epochs, and a learning rate of 2e-2 and a weight decay of 1e-8. Adam (\cite{Kingma2015AdamAM}) is used as the optimizer. We use the Python-based torchdiffeq library (\cite{chen2018neuralode}) for the numerical integration during training, and the explicit Runge Kutta 4$^{\text{th}}$ order method (RK4) within the library was used. PyTorch and Adam are also utilized for the ensemble weight optimization procedure. The learning rate and weight decay are set at 2e-3 and 1e-9 respectively.

For the quadrotor, the training dataset is collected using a nominal model predictive controller. It consists of 4000 data samples and spans 20 seconds. 75\% of these data samples are used to train the KNODE ensemble and 25\% are used to optimize the weights on the models in the ensemble. For the ensemble, we trained 5 networks. Each network consists of the input and outer layers, with one hidden layer and the hyperbolic tangent as the activation function. The number of neurons in this hidden layer varies from 8 to 40, with the addition of 8 neurons for each model. Training is also performed in PyTorch. Each of these models is trained for 1000 epochs, and we use a learning rate of 2e-2 and a weight decay of 1e-9. Similarly, we use Adam as the optimizer and use torchdiffeq and RK4 for the numerical integration during training. PyTorch and Adam are also utilized for the weight optimization procedure. It is optimized for 1500 epochs and the learning rate and weight decay are set at 3e-2 and 1e-9 respectively.

\subsection{Controller Parameters}
For the inverted pendulum, the cost matrices are chosen to be $Q :=\text{diag}[1.0,\,0.1],\,R :=$1e-5, and the control horizon is set to 10. The parameters $\rho$ and $\gamma$ are set to be 1.1 and 0.01 respectively. 
For the quadrotor, the cost matrices are set as \[Q :=\text{diag}[0.05,\,0.05,\,0.05,\,0.05,\,0.05,\,0.05,\,0.1,\,0.1,\,0.1,\,0.1,\,0.1,\,0.1,\,0.1],\]
\[R :=\text{diag}[0.001,\,0.001,\,0.001,\,0.001],\]
and the control horizon is set to 20. we set $\rho$ and $\gamma$ to be 1.1 and 0.5 respectively. We further impose control constraints on the commanded thrust to be between $[0,\,0.575]$. 

\end{document}